\newtheorem{lemma}{Lemma}
\newtheorem{observation}{Observation}
\newtheorem{definition}{Definition}
\newcommand{\set}[1]{\left\{#1\right\}}
\newcommand{\ondemservice}{demand-responsive}
\newcommand{\relGap}{relative gap from optimum supply}
\newcommand{\RelGap}{Relative gap from optimum supply}
\newcommand{\R}{\mathbb R}
\newcommand{\Rp}{\mathbb R_{\ge 0}}
\newcommand*{\define}{:=}
\renewcommand*{\vec}[1]{\textbf{#1}}
\DeclareMathOperator*{\argmax}{argmax}
\title{Staff Scheduling for Demand-Responsive Services}
\author[1]{Debsankha Manik}
\author[2]{Rico Raber}
\affil[1,2]{MOIA GmbH, Hamburg, Germany}
\begin{document}
\maketitle

\begin{abstract}
    Staff scheduling is a well-known problem in operations research and finds its
    application at hospitals, airports, supermarkets, and many others. Its goal is
    to assign shifts to staff members such that a certain objective function,
    e.g. revenue, is maximized.  Meanwhile, various constraints of the staff
    members and the organization need to be satisfied.  Typically in staff
    scheduling problems, there are hard constraints on the \emph{minimum} number
    of employees that should be available at specific points of time. Often
    \emph{multiple hard constraints} guaranteeing the availability of specific
    number of employees with different \emph{roles} need to be
    considered.  
    
    Staff scheduling for \ondemservice{} services, such as, e.g., ride-pooling and ride-hailing
    services, differs in a key way from
    this: There are often no hard constraints on the minimum number of employees
    needed at fixed points in time. Rather, the number of employees working at
    different points in time should vary according to the demand at those points
    in time. Having too few employees at a point in time results in lost
    revenue, while having too many employees at a point in time results in
    \emph{not having enough} employees at other points in time, since the total
    personnel-hours are limited.  The objective is to maximize
    the \emph{total reward} generated over a planning horizon, given a monotonic
    relationship between the number of shifts active at a point in time and the
    \emph{instantaneous reward} generated at that point in time. This key difference
    makes it difficult to use existing staff scheduling algorithms for planning
    shifts in \ondemservice{} services.
    
    In this article, we present a novel approach for modelling and solving staff
    scheduling problems for \ondemservice{} services that optimizes for the
    relevant reward function. 
\end{abstract}

\section{Introduction}
Working time of the employees is a finite and valuable resource in most
organizations. Therefore, planning shifts for the staff members in an optimal
way is very important in a wide variety of industries, including transportation,
hospitality, manufacturing and retail.  Especially in settings where the demand
for the goods or services provided varies with time,  it is very important for
the shift plan to ensure that the total deployed workforce at a point of time
matches the demand well \cite{thompson1998labor_2}: planning too few employees at
a period of high demand leads to lost productivity, planning too many employees
during low demand results in an opportunity cost. 

The staff scheduling process often needs to take various constraints into
account. Some of these constraints may be a consequence of limited availability
of a physical resource, such as workspace and equipments
\cite{zhang2005equipment,berman1997scheduling}. Some constraints may be legal in
nature, e.g. specific amount of breaks between successive shifts, mandatory days
off etc. Some constraints may be due to the need to maintain high job
satisfaction among the staff \cite{thopsn1999labor_3,gordon2004improving}: 
Certain staff members may want to work only during the morning or night, for
example. Finding any shift plan subject to these constraints me be a difficult
task, let alone one \emph{maximizing} some reward function. 

Certain application areas might require specific additional constraints and
different objectives.  Staff scheduling for transportation services
\cite{ernst1998train} often involves spatial constraints on the starting and
ending location of the shifts and conformity with a time table of operation.
Nurse scheduling in hospitals as well as emergency services need to accommodate
different categories of personnel (able to perform different functions), as well
as fair distribution of day and night shifts \cite{aickelin2000genetic,
mohammadian2019scheduling}. Many industries need to accommodate part time
employees \cite{vakharia1992efficient} working only certain days of the week.

Integer programming was introduced as a method of solving staff scheduling
problems as early as 1954 \cite{dantzig1954comment}, 
and various integer programming formulations have been proposed since then
\cite{keith1979operator,jacobs1996overlapping,bechtold1990implicit,beaumont1997scheduling,brusco2001starting}.
The number of decision variables in integer programming models may be very
large in realistic settings \cite{glover1986general}, leading to very high
runtimes. Therefore, various metaheuristics based methods have been studied for
solving staff scheduling problems, such as genetic programming 
\cite{aickelin2000genetic, tanomaru1995staff}, tabu search
\cite{glover1986general, cavique1999subgraph} and simulated annealing
\cite{brusco1993simulated, ernst1998train}.

We refer the reader to the review articles
\cite{ernst2004survey,alfares2004survey} for comprehensive surveys of the staff
scheduling problem and its applications.

\section{Demand Modelling for Staff Scheduling}
\label{sec:demand-modelling}
Staff scheduling in various settings start with a \emph{demand modelling} step
\cite{ernst2004survey,thompson1998labor_2}, where for each time step, a required 
number of active shifts are calculated. We will henceforth refer to this quantity as
the \emph{desired supply}. This is computed by first estimating future demand
of the services provided by the organization. Then domain knowledge on how much demand
can be served by how much supply is utilized to convert the estimated
demand into the desired supply \cite{dantzig1954comment}. In the later 
stages of the shift scheduling process \cite[~p.\,6]{ernst2004survey}, where
shifts are planned while maximizing a certain objective, deviations from the desired 
supply are penalized. Demand modelling, i.e.
converting estimated demand to desired supply can be performed using
various approaches \cite{thompson1998labor_2}, including the three described below.
\begin{description} 
    \item[Productivity standards] assuming the desired supply at each time
    step vary linearly with the demand at that time step.  
    \item[Service standards] equating the desired supply to the number that achieves a 
    constant fraction of the demand being served at each time step.
    \item[Economic standards] assigning a cost to each unit of workforce deployed at each time step, and 
    assigning a reward to each unit of demand served at each time step, and choosing
    the supply that maximizes the reward minus the cost 
    at that time step.
\end{description}
We note that both the service standards approach and the economic standards approach 
assumes that given a supply, i.e. the number of active shifts, and a demand, it is possible to determine the 
amount of demand that can be served or the amount of reward that can be generated. 
Crucially, the demand modelling step is usually performed without utilizing any 
knowledge of the available workforce \cite{thompson1998labor_2}.

\section{Our Contribution: Integrating Demand Modelling and Shift Optimization into a Single Step}
In this article, we present a novel approach for modelling and solving staff
scheduling problems for \ondemservice{} services. We eliminate demand modelling
as a separate step and incorporate it into a mixed-integer program producing 
optimal shift plans. Our
approach maximizes the total reward (e.g.
revenue) accumulated over the planning horizon, given that the relationship between the 
total number of active shifts
at a point of time and the instantaneous reward is known and is concave. By eschewing the demand modelling step, our approach is able to achieve 
a higher value of total reward compared to what is possible with approaches where demand
modelling is done in a separate step.

Our methods and results apply to a broad class of \ondemservice{} services
satisfying the following criteria. First, the number of employees available during
a \emph{planning horizon} (e.g. a week) is fixed and is known a priori, and so
are the numbers and lengths of shifts that must be assigned to each employee 
within the planning horizon. Second, there exists a metric whose sum total over
the planning horizon is to be maximized by the staff scheduling process (this
could be expected revenue, for example).  Third, the metric is concave with respect to
the number of active shifts at each point in time. However, this concave
function need not be the same for all points in time.

We present a way to model the staff scheduling problem as a mixed-integer convex optimization
problem. We also demonstrate
a concrete example of this general approach for planning driver shifts for an
\ondemservice{} mobility-as-a-service (MaaS) company. For 
sake of simplicity, we leave out the rostering step, where the planned shifts are 
assigned to individual employees, in this article. 

\section{Problem setting}
\label{sec:problem-setting}
For a natural number $n$, we denote by $[n]$ the set $\set{1,\dots,n}$.
We consider a planning horizon (e.g. 7 days) which is discretized into $T$ time steps.
We assume that at each time step, shifts of $k \ge 1$ \emph{different shift
types} can be started and each shift type $i\in [k]$ has a duration of $\delta_i \in
\mathbb{N}$ time steps. Shift types may indicate, for example, different employee groups.
As part of the shift planning process, we need to determine how many shifts of each type to start at which time step.
To this end, for every shift type $i\in [k]$ and time step $t\in [T]$, we introduce a decision variable 
\begin{equation}
x_{i,t}\define\text{ the number of shifts of type } i \text{ \textit{starting} at time } t.
\end{equation}

The number of \textit{active shifts} of type $i$ at time $t$ is then given by
\begin{equation}
y_{i,t} = \sum_{\tau=t-\delta_i+1}^{t} x_{i,\tau} \label{eq:active-shifts}
\end{equation}
and the total number of active shifts, or \emph{supply}, at time $t$ is given by
\begin{equation}
y_t = \sum_{i=1}^k y_{i,t}. \label{eq:total-active-shifts}
\end{equation}

Let the vector
\begin{equation}
\vec{x}\define (x_{i,t})_{i\in [k], t\in [T]} \label{eq:x-vector}
\end{equation}
describe the \textit{shift plan}. We assume that $\vec x$
is subject to linear constraints of the form $A\vec x \le \vec b$ for some matrix $A$ 
and vector $\vec b$ of corresponding dimensions. Some of these constraints are needed to specify that the
total working hours of the employees is a predetermined constant. Some rows of $A$ may indicate
certain operational constraints the staff schedule must fulfil (e.g., the total number
of simultaneous users of a physical resource must not be larger than the total number of
that resource available).

The objective is to find a feasible shift schedule that maximizes a reward
function over the planning horizon.  Specifically, for each time step $t\in [T]$,
we have a \textit{concave} function $f_{t} \colon \Rp 
\to \mathbb{R}$ mapping the total number of active shifts $y_t$ at time $t$ to the
\textit{instantaneous reward} $r_t$, 
\begin{align} 
    r_t &= f_t(y_t) \label{eq:def-reward-function}.
\end{align} 
The objective is to maximize the total reward over the
planning period $\sum_{t=1}^T r_t$.

Using \Cref{eq:active-shifts}, the resulting mixed-integer convex problem can then be formulated as 
\begin{align*} 
    \max \quad & \sum_{t=1}^T f_t\left(\sum_{i=1}^k y_{i,t}\right) \\ 
    \text{s.t.} \quad & y_{i,t} = \sum_{\tau=t-\delta_i+1}^{t} x_{i,\tau}  && \text{for all } i \in [k],\; t \in [T], \\ 
    & A\vec x \le \vec b \\ & x_{i,t} \in \mathbb{Z} && \text{for all } i \in [k],\; t \in [T]. 
\end{align*}

\section{Application to shift planning for on-demand mobility}\label{sec:application}
In on-demand mobility services such as MOIA, the number of vehicles deployed at
a given point in time should vary according to the demand from the customers to avail the 
service at that time. The total number of deployed vehicles naturally is
equal to the total number of driver shifts active at that time. The demand may
not be exactly known when the shifts need to be planned, but can be
estimated based on historical data. 

Usually, the objective function that should be maximized by the shift planning 
optimization process is the total number of served customers over the whole
planning period. Depending on the business model of the company, the objective
function may be slightly different, but it stands to reason that the objective
function is the sum over the time steps of a monotonically increasing function
of the number of active shifts at each time step: Because the more shifts are
active at a time step, the more customers can be served at that point of time.
In addition, we make the assumption that the function is concave. Intuitively, 
this means that the marginal benefit of adding one more shift at a time step
decreases with the number of active shifts at that time step. This is a reasonable
assumption, since the more shifts are active at a time step, the more likely it is
that the demand in the vicinity of a vehicle is already satisfied by another
vehicle, and thus, the contribution of that vehicle to the total number of served
customers is smaller.

Finally, we assume that the total number of drivers
employed by the company is a known and fixed number. 


\subsection{The Variables and Constraints}
Following the general problem setting in Section~\ref{sec:problem-setting}, we
assume we have $k\ge 1$ different shift types, each with a fixed duration of $\delta_i$
time steps, $i\in [k]$.  A shift type may specify various properties of the employees, e.g.
the contract type of the driver (full-time, part-time, contracted etc.), the
days of the week as well as the time of the day (day/night) the
driver is available.

Then the monotonically increasing concave function $f_{t}$ maps $y_{t}$, the total number of shifts active at
the time step $t$, to the instantaneous reward $r_t$ (e.g. number of trips served
at that time step). We assume that the function $f_t$ is known (e.g. by estimation
from historical data).

As described above, the decision variables are $x_{i,t}$, the number of shifts of type $i$
started at time $t$. Various legal and operational constraints may need to be considered
when optimizing the shifts, for example:

\begin{itemize}
    \item The total number of active shifts at a time may be bounded above by
    the total number of vehicles available.  
    \item The total number of shifts of a certain type within a day must be
    bounded above by the total number of drivers with that shift type employed by the provider.
\end{itemize}

\subsection{The shift types}
Let the planning horizon be a week, and suppose that shifts may be started at the 
beginning of every hour, i.e. the number of time steps is $T=7\times 24=168$.
For simplicity, we assume that there are only one kind of employees, 
each working in $s$ shifts of length $\delta$ hours each (typical examples are $s=5$ 
and $\delta=8$). Consequently, we have 
only one shift type of duration $\delta$ hours. Therefore $y_{t}=y_{1,t}$ and $x_t = x_{1,t}$. 
Let the number of employees be $N$. Then the total number of shifts within the planning 
horizon of one week is $sN$.

\subsection{The reward function}
It is reasonable to assume that the on-demand mobility provider is interested in 
maximizing the total number of served rides within the planning period. Then the 
monotonically increasing concave reward function $f_t$ should map the number of active 
shifts at time $t$ to the number of served rides at that time. 
The number of served rides, given the number of active shifts $y_t$, should depend on  
the number of \emph{demanded rides}: that is, $f_t$ should be parametrized by the
demanded rides $d_t$ at time $t$: $f_t=f_{t, d_t}$. Additionally, it stands to reason 
that, for a fixed number of active shifts, the number of served rides will
not decrease with the number of demanded rides, i.e.,
\begin{equation}
    \frac{\partial f_{t, d_t}(y_t)}{\partial d_t} \ge 0 \quad \text{for all } y_t\ge 0.
\end{equation}
Also, since the number of served rides cannot exceed the  number of demanded rides,
\begin{equation}
    \lim_{y_t\to\infty} f_{t, d_t} (y_t)\le d_t.
\end{equation}

The method of this article does not depend on the specific choice of $f_t$, so long as 
it is monotonically increasing and concave. For the sake of concreteness, we will
presently make the following choice:
\begin{equation}
    f_{t, d_t}(y_t) = d_t\left(1-e^{-ay_t/d_t}\right), \label{eq:chosen-reward-function}
\end{equation}
for some $a>0$; see \Cref{fig:reward-function} for a visual representation.

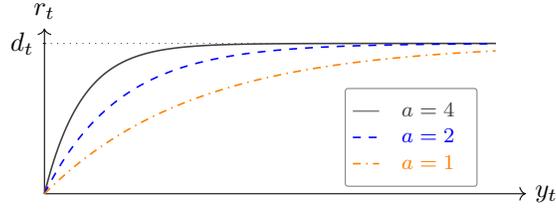
\begin{figure}[t]
\centering
\begin{tikzpicture}[scale=2]
    \draw[->] (0,0) -- (3.2,0) node[right] {$y_t$};
    \draw[->] (0,0) -- (0,1.1) node[above] {$r_t$};
    \draw (0.01,1) -- (-0.01,1) node[left] {$d_t$};
    \draw[dotted, darkgray] (0,1) -- (3,1);
    \draw[darkgray, semithick, domain=0:3,smooth, samples=100] plot (\x,{(1 - exp(-4*\x))});
    \draw[semithick, blue, dashed, domain=0:3,smooth, samples=100] plot (\x,{(1 - exp(-2*\x))});
    \draw[semithick, orange, dash dot, domain=0:3,smooth, samples=100] plot (\x,{(1 - exp(-1*\x))});
    \begin{scope}[scale=0.5, shift={(-1,-0.1)}]
    \draw[gray,thin, rounded corners=1] (5,0.2) rectangle (6.75,1.5);
    \draw[semithick, gray] (5.1, 1.2) -- (5.45, 1.2);
    \node[darkgray] at (6.1, 1.22) {\footnotesize $a=4$};
    \draw[semithick, dashed, blue] (5.1, 0.85) -- (5.45, 0.85);
    \node[blue] at (6.1, 0.87) {\footnotesize $a=2$};
    \draw[semithick, dash dot, orange] (5.1, 0.5) -- (5.45, 0.5);
    \node[orange] at (6.1, 0.52) {\footnotesize $a=1$};
    \end{scope}
    
\end{tikzpicture}
\caption{The reward $r_t$ as a function of supply $y_t$ as defined in \eqref{eq:chosen-reward-function} for 
$d_t = 1$ and different values of the parameter $a$.}
\label{fig:reward-function}
\end{figure}

\subsubsection{Demanded rides}
The only remaining unknown in the reward function is the demanded rides for each
time $t$. Since our approach does not assume any property of the demanded rides,
we will assume that $d_t$ is a periodic function of time $t$ with 
daily and weekly seasonality. More precisely, we will assume that $d_t$ is given by
the function
\begin{equation}
    d_t = \frac{d_{\max}}{2}\left[1-\cos\left(\frac{\pi t}{12}\right)\right]
    \sin\left(\frac{\pi t}{T}\right),
\end{equation}
see \Cref{fig:demanded-rides} for a visual representation.

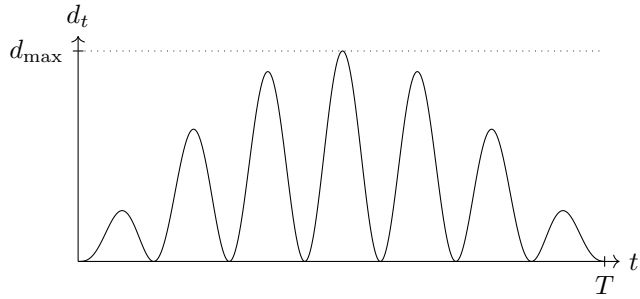
\begin{figure}[b]
\centering
\begin{tikzpicture}
    \draw[->] (0,0) -- (7.2,0) node[right] {$t$};
    \draw[->] (0,0) -- (0,3) node[above] {$d_t$};
    \draw (7,0.07) -- (7,-0.07) node[below] {$T$};
    \draw (0.07,2.8) -- (-0.07,2.8) node[left] {$d_{\max}$};
    \draw[dotted, darkgray] (0,2.8) -- (7,2.8);
    \draw[xscale=1/57,domain=0:57*7,smooth, samples=1000] plot (\x,{1.4*(1-cos(2*pi*\x))* sin(pi*\x/7)});
\end{tikzpicture}
\caption{The demanded rides $d_t$ as a function of time $t$.}
\label{fig:demanded-rides}
\end{figure}

\subsection{The constraints}
\subsubsection{Shift plan must match the number of employed drivers}
Recall that we have $N$ employees, each working $s$ shifts of $\delta$ hours each. This imposes 
the constraint that the total number of shifts within the planning horizon must be $sN$,
\begin{equation}
    \sum_{t=1}^T x_t = sN. \label{eq:constraints-total-shifts}
\end{equation}
However, this constraint alone is too weak: it does not take into account the fact that a 
single employee cannot work more than one shift at a time. In fact, most workplaces
have a policy that each employee needs to have a minimum break of, say, $\beta$ time steps 
between two consecutive shifts. We can accommodate these constraints by extending each shift by 
$\beta$ time steps and requiring that at any point in time, there
are at most $N$ \textit{extended shifts} active. More formally, we introduce auxiliary
variables $z_t$ for  $t\in [T]$, given by
\begin{equation}
    z_t \define \sum_{\tau=t-\delta-\beta +1}^{t} x_{\tau}, \label{eq:z-definition}
\end{equation}
and require them to satisfy the following constraints
\begin{equation}
    z_t \le N \quad \text{for all } t\in [T]. \label{eq:constraints-max-extended-shifts}
\end{equation}
We note that for all $t\in [T]$, $x_t \leq z_t$, and therefore, 
\eqref{eq:constraints-max-extended-shifts} implies that 
\begin{equation}
    x_t \le N \quad \text{for all } t\in [T]. \label{eq:constraints-max-shifts}
\end{equation}
We describe each extended shift $s$ by an interval $s = [t_1, t_2)$, 
where $t_1$ is the starting time of the shift and $t_2$ is $\beta$ time steps 
after the end of the shift, i.e. $t_2 = t_1 + \delta + \beta$. We say that 
two extended shifts $s_1$ and $s_2$ \textit{overlap} if $s_1\cap s_2 \neq \emptyset$.
\begin{observation}\label{lemma:graph-bipartite}
    Let $D_1$ and $D_2$ be two drivers and $S_1$ and $S_2$ be the set of extended shifts assigned
    to $D_1$ and $D_2$, respectively. Let $G$ be the graph with node set
    $S_1 \cup S_2$, where two extended shifts are connected by an edge iff they overlap.
    Then $G$ is bipartite with node sets $S_1$ and $S_2$ if and only if there is no overlap between any two 
    extended shifts assigned to the same driver.
\end{observation}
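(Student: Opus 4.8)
The plan is to treat this observation as an unpacking of the definition of a bipartite graph with a \emph{prescribed} bipartition, and to translate the purely graph-theoretic condition into the overlap condition on shifts. First I would note that each extended shift is assigned to exactly one driver, so $S_1 \cap S_2 = \emptyset$ and $(S_1, S_2)$ is a genuine candidate bipartition of the vertex set $V = S_1 \cup S_2$. I would then recall the standard characterisation: $G$ is bipartite with parts $S_1$ and $S_2$ precisely when every edge of $G$ has one endpoint in $S_1$ and the other in $S_2$, or equivalently when $G$ contains no edge with both endpoints inside $S_1$ and no edge with both endpoints inside $S_2$.

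The key step is the translation between edges and overlaps. By the definition of $G$, an edge both of whose endpoints lie in $S_1$ exists if and only if some two distinct extended shifts assigned to $D_1$ overlap; symmetrically for $S_2$ and $D_2$. Hence the statement \emph{no edge internal to $S_1$ and no edge internal to $S_2$} is literally the same as \emph{no two extended shifts assigned to the same driver overlap}.

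With this correspondence in hand both directions are immediate. For the ($\Leftarrow$) direction, assuming no same-driver overlap rules out any internal edge in either part, so every edge must run between $S_1$ and $S_2$, which is exactly the bipartition claim. For the ($\Rightarrow$) direction, if $(S_1, S_2)$ is a valid bipartition then by definition there is no internal edge, and were two shifts of a common driver to overlap they would form exactly such an internal edge, a contradiction.

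I do not expect any real obstacle here: the content is definitional rather than combinatorial, so the only points requiring care are stating the disjointness of $S_1$ and $S_2$ explicitly and making the edge-versus-overlap correspondence precise — in particular restricting to \emph{distinct} shifts, so that the trivial self-overlap $s \cap s \ne \emptyset$ is not mistaken for an edge. The value of the observation is not its difficulty but the correspondence it sets up, which lets the feasibility of an assignment be read off from a graph picture; I anticipate this is the reason it is recorded before the later analysis.
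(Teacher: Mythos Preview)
Your proposal is correct. The paper records this as an Observation without supplying any proof, treating it as an immediate consequence of the definition of a bipartite graph with a prescribed bipartition; your unpacking---that the bipartition condition on $(S_1,S_2)$ is precisely the absence of edges internal to either part, which by the edge rule of $G$ is the same as the absence of same-driver overlaps---is exactly the intended reading. The additional care you take over the disjointness of $S_1$ and $S_2$ and over excluding trivial self-overlap is appropriate and slightly more explicit than the paper, which silently assumes both.
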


We can now prove that the constraints \eqref{eq:constraints-total-shifts} and
\eqref{eq:constraints-max-extended-shifts} are sufficient to ensure that the
shift plan can be assigned to the drivers in a feasible way. 

    

\begin{lemma}\label{lemma:shift-assignment}
If (\ref{eq:constraints-total-shifts}) and (\ref{eq:constraints-max-extended-shifts}) hold, then
there exists an assignment of shifts to drivers such that each driver works exactly 
$s$ shifts of $\delta$ hours each, and each driver has a break of at least $\beta$ time steps between
any two consecutive shifts.
\end{lemma}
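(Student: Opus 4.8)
The plan is to exploit the fact that every extended shift has the same length $L\define\delta+\beta$, which reduces the statement to a balanced interval-partitioning claim. By \eqref{eq:constraints-total-shifts} there are exactly $sN$ extended shifts in total, each of length $L$; an extended shift started at time $\tau$ is active at exactly those time steps $t$ with $t-L+1\le \tau\le t$, so by \eqref{eq:z-definition} the quantity $z_t$ counts precisely how many extended shifts are active at time $t$, and \eqref{eq:constraints-max-extended-shifts} says at most $N$ are active at any time. A feasible assignment is then the same thing as a partition of the $sN$ extended shifts into $N$ driver-classes, each of size exactly $s$ and each consisting of pairwise non-overlapping shifts: two non-overlapping extended shifts of a driver leave a gap of at least $\beta$ between the underlying $\delta$-hour shifts, which is exactly the break requirement.

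First I would sort all $sN$ extended shifts by start time (breaking ties arbitrarily), label them $1,\dots,sN$, and assign shift number $j$ to driver $((j-1)\bmod N)+1$, i.e.\ deal the shifts to the drivers round-robin in sorted order. This immediately gives each driver exactly $s$ shifts, since $sN$ items dealt cyclically to $N$ drivers land $s$ per driver. The only remaining point — and the heart of the argument — is that no driver receives two overlapping extended shifts. Because all intervals have length $L$ and are taken in start-time order, it suffices to verify that \emph{consecutive} shifts handed to a given driver are disjoint: for equal-length intervals, disjointness of temporally consecutive members forces pairwise disjointness. For a fixed driver these consecutive shifts occupy global positions $p$ and $p+N$, so the whole claim reduces to showing that positions $p$ and $p+N$ can never overlap.

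The main obstacle is establishing this last fact, which I would settle by a pigeonhole argument against \eqref{eq:constraints-max-extended-shifts}. Suppose, for contradiction, that the shifts at positions $p$ and $p+N$ overlap. Since they are sorted by start time and have equal length $L$, overlap means the start time of shift $p+N$ lies strictly inside shift $p$, i.e.\ $\mathrm{start}(p+N)<\mathrm{start}(p)+L$. Evaluating at the time step $t=\mathrm{start}(p+N)$, every one of the $N+1$ shifts at positions $p,p+1,\dots,p+N$ is active: each has start time $\le t$ by sorting, and start time $\ge \mathrm{start}(p)>\mathrm{start}(p+N)-L=t-L$, so $t$ lies inside it. Hence $z_t\ge N+1$, contradicting \eqref{eq:constraints-max-extended-shifts}. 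Therefore every driver's shifts are pairwise non-overlapping, and by the reduction above this yields the required feasible assignment. One could instead carry out the non-overlap bookkeeping pairwise between drivers via \Cref{lemma:graph-bipartite}, but the direct sorting argument seems cleaner and sidesteps any explicit rebalancing step.
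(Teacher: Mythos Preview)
Your argument is correct and takes a genuinely different route from the paper's. The paper proceeds in two stages: first it argues greedily that \emph{some} assignment respecting the $\beta$-break constraint exists (essentially observing that at each time $t$ the term $z_t-x_t$ counts unavailable drivers, so $z_t\le N$ leaves at least $x_t$ drivers free), and then it runs a separate rebalancing phase in which drivers with too many and too few shifts swap blocks along paths in the bipartite overlap graph of \Cref{lemma:graph-bipartite}, reducing the imbalance by two at each step until every driver has exactly $s$ shifts.

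Your round-robin construction collapses both stages into one: sorting by start time and dealing cyclically gives the exact count $s$ per driver immediately, and the pigeonhole against \eqref{eq:constraints-max-extended-shifts} at $t=\mathrm{start}(p+N)$ cleanly certifies non-overlap without any swap machinery. This is the classical interval-partitioning argument and is strictly more direct here. The paper's swap-based proof, on the other hand, has the side benefit that it shows how to \emph{repair} an arbitrary break-respecting assignment into a balanced one, which is what the authors later exploit when they remark that the proof yields a constructive assignment algorithm; your construction also yields an algorithm, just a different (and simpler) one.
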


\begin{proof}
We split this proof into two parts. First, we show that a shift plan
satisfying (\ref{eq:constraints-total-shifts}) and
(\ref{eq:constraints-max-extended-shifts}) can be assigned to $N$ drivers such that
there is a break of at least $\beta$ time steps between any two shifts of a 
driver. In the second part, we show that the same shift plan can be assigned to
$N$ drivers such that, additionally, each driver works exactly $s$ shifts of $\delta$ hours each.

Let $\vec{x}$ as defined in \eqref{eq:x-vector} be a shift plan that fulfills (\ref{eq:constraints-total-shifts}) and
(\ref{eq:constraints-max-extended-shifts}).

\paragraph*{Part 1: Proof that the shift plan can be assigned to the drivers
with adequate breaks between shifts\\} We note that, from \eqref{eq:z-definition},
\begin{equation}\label{eq:split_zt}
    z_t = \sum_{\tau=t-\delta-\beta +1}^{t} x_\tau
        =\sum_{\tau=t-\delta-\beta +1}^{t-1} x_\tau + x_t.
\end{equation}
The first term on the right-hand side of \eqref{eq:split_zt} is the number of drivers who may not be assigned any new
shift at time $t$, because they either have an ongoing shift or their last
shift ended less than $\beta$ time steps ago. The constraint $z_t\leq N$ therefore guarantees that 
at time $t$, there are at least $x_t$ drivers who may be assigned a new shift.

\paragraph*{Part 2: Proof that the shift plan can be assigned to the drivers
with each driver working exactly $s$ shifts\\}
    
Now assume we have an assignment of shifts to drivers such that each driver has
a break of at least $\beta$ time steps between two consecutive shifts, 
and assume that not every driver works exactly $s$ shifts.
Then, by \eqref{eq:constraints-total-shifts}, there must be a driver $D_1$ who works more than $s$ shifts and 
a driver $D_2$ who works less than $s$ shifts.
Let $S_1$ be the extended shifts of driver $D_1$ and $S_2$ be the extended shifts of driver $D_2$.
Then we have $|S_1| \ge |S_2| + 2$.

Let $G$ be the graph with node set $S_1 \cup S_2$, where two nodes are connected
by an edge iff they overlap. By Observation~\ref{lemma:graph-bipartite}, $G$ is bipartite
with node sets $S_1$ and $S_2$.  Since all extended shifts have the same length,
any extended shift can overlap with at most two other extended shifts.
Therefore, every node in $G$ has degree at most 2. As a consequence, every
connected component of $G$ is a path or a cycle. In particular, every connected
component contains a path with all nodes of that connected component. 

Since $G$
is bipartite, for any such path $P$, we have $|S_1 \cap P| - |S_2 \cap P| \in
\set{-1,0,1}$. Since $|S_1| \ge |S_2| + 2$, there must be a path $P$ such that
$|S_1\cap P|  - |S_2\cap P| = 1$.  Define $P_1 \define S_1 \cap P$ and $P_2
\define S_2 \cap P$. Since $P$ is a path containing nodes only from a single
connected component of $G$, there are no edges between $P$ and $(S_1 \cup S_2)
\setminus P$. In particular, no edge exists between 
\begin{itemize}
    \item $S_1 \setminus P$ and $P_2$, as well as
    \item $S_2 \setminus P$ and $P_1$.
\end{itemize}
As a consequence, we can reassign the extended shifts in $P_1$ to
$S_2$ and the extended shifts in $P_2$ to $S_1$, thereby creating the new assignments
$S_1'$ and $S_2'$ of shifts among $D_1$ and $D_2$, with
\begin{align*}
    S_1' &= (S_1 \setminus P_1) \cup P_2, \\
    S_2' &= (S_2 \setminus P_2) \cup P_1,
\end{align*}
satisfying the property that the resulting graph remains
bipartite. By Observation~\ref{lemma:graph-bipartite}, each
driver still has a break of at least $\beta$ time steps between any two shifts.
In this way, we have reduced the difference in number of shifts between $D_1$
and $D_2$ by 2.  We can repeat this process until every driver works exactly $s$
shifts.
\end{proof}

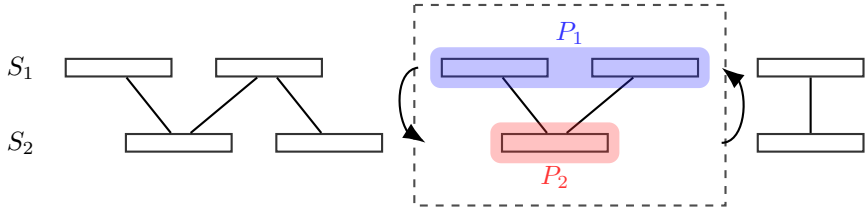
\begin{figure}
\tikzset{
    shift node/.style={
                      shape=rectangle, 
                      thick,
                      draw=black!80,
                      minimum width = 40pt,
                      anchor=center
                    },
}
\begin{tikzpicture}[thick]
\node at (-0.5,2.5) {$S_1$};

\foreach \x [count=\i] in {0,2, 5, 7 , 9.2}{
    \node[shift node] (a\i) at (1.0*\x+0.8,2.5) {};
}

\node at (-0.5,1.5) {$S_2$};

\foreach \x [count=\i] in {1,3,6,9.4}{
    \node[shift node] (b\i) at (1.0*\x+0.6,1.5) {};
}

\draw (a1) -- (b1);
\draw (a2) -- (b1);
\draw (a2) -- (b2);

\draw (a3) -- (b3);
\draw (a4) -- (b3);

\draw (a5) -- (b4);

\coordinate[xshift=-4, yshift=4] (a) at (a3.north west);
\coordinate[xshift=4, yshift=-4] (b) at (a4.south east);
\fill[fill=blue!80!white, fill opacity=0.3, rounded corners] (a) rectangle (b);
\node[yshift=13,text=blue!80!white] (P1) at ($(a3)!0.5!(a4)$) {$P_1$};

\coordinate[xshift=-4, yshift=4] (a) at (b3.north west);
\coordinate[xshift=4, yshift=-4] (b) at (b3.south east);
\fill[fill=red!80!white, fill opacity=0.3, rounded corners] (a) rectangle (b);
\node[yshift=-13,text=red!80!white] (P1) at (b3) {$P_2$};

\coordinate[xshift=-10, yshift=20] (a) at (a3.north west);
\coordinate[xshift=10, yshift=-20] (b) at (a4.north east |- b3.south east);
\draw[dashed, color=black!70!white] (a) rectangle (b);

\draw[-{Latex[length=3mm]}] [bend left=20] ([xshift=-3mm]a3.west) to [bend left=270]([xshift=-2mm]a3.west |- b3.west);
\draw[{Latex[length=3mm]}-, bend left=30] ([xshift=3mm]a4.east) to [bend left=90] ([xshift=3mm]a4.east |- b3.east);

\end{tikzpicture}
\caption{Proof of Part 2 of \Cref{lemma:shift-assignment}. The dashed rectangle 
shows a connected component of the bipartite graph $G$. $P_1$ can $P_2$ can be swapped,
keeping the graph bipartite, because there is no edge between the connected component 
and the rest of the graph.}
\label{fig:second-case}
\end{figure}

We note that the proof of \Cref{lemma:shift-assignment} is constructive, and
therefore, we can use it to assign the shifts to the drivers in a feasible way.

\subsubsection{Available vehicles constraint}
If the number of vehicles available to the mobility provider is fixed, say, $c$ then we need 
another constraint: The total number of active shifts at any time must be less than $c$,

\begin{equation}
    y_t \le c, \quad \text{for all } t\in [T]. \label{eq:constraints-vehicles}
\end{equation}

\subsection{The Full Program}
\label{sec:full-program}
The problem to maximize the total reward over the planning horizon
subject to shift and vehicle constraints can then be formulated as the following mixed-integer
convex program.
\begin{subequations}
\label{full-mip-ours}
\label{mip}
\begin{align}
     \max \quad & \sum_{t=1}^{T} f_t\left(y_t\right)\\
    \text{s.t.} \quad & y_{t} = \sum_{\tau=t-\delta + 1}^{t} x_{\tau}, && \text{for all } t\in [T], \label{program:first-constraint}\\
    &y_t \le c, &&\text{for all } t\in [T],  \\
    &\sum_{t=1}^T x_t = sN, \\
    &z_t = \sum_{\tau=t-\delta-\beta +1}^{t} x_t, &&\text{for all } t\in [T], \\
    &z_t \le N, &&\text{for all } t\in [T],  \\
    &x_t \ge 0 &&\text{for all } t\in [T], \\
    &x_t \in \mathbb{Z} &&\text{for all } t\in [T], \label{program:last-constraint}
\end{align}
\end{subequations}
where 
\begin{equation}
f_{t}(y_t) = d_t\left(1-e^{-ay_t/d_t}\right) \label{eq:concr-prodmodel} 
\end{equation}
and
\begin{equation}
d_t = d_{\max}\left[1+\sin\left(\frac{\pi t}{12}\right)\right].
\end{equation}

For a better overview, we summarize the parameters of the mathematical program 
in the following table.
\begin{center}
\begin{tabular}[!ht]{ll}
    Parameter & Meaning \\
    \hline
    $T$ & number of time steps in the planning horizon \\
    $N$ & number of employees \\
    $s$ & number of shifts each employee works \\
    $\delta$ & length of each shift  \\
    $\beta$ & minimum break between two consecutive shifts \\
    $d_{\max}$ & maximum amplitude of the sinusoidal demand pattern \\
    $a$ & steepness of the concave function $f_t$ \\
\end{tabular}
\end{center}

\subsection{Benchmarking against shift agnostic optimum}
\label{sec:shift-agnostic-optimum}
Maximizing the total reward is the objective of the shift planning process we have 
described. If more constraints are imposed, the total optimum reward may decrease, yet
those constraints may be necessary for operations, legal or strategic reasons.
For an \ondemservice{} service provider, it is important to know how much is the cost of such 
constraints. To that end, we will now present a novel metric for evaluating the quality 
of a shift plan.

\subsubsection*{The shift agnostic optimum}
Henceforth, we will assume that the reward functions $f_t$ are continuous for all $t\in [T]$.
We define the \emph{shift agnostic optimum} as the maximum reward that can be achieved,
ignoring all the constraints and even ignoring the fact that the shifts are of 
fixed lengths, but keeping the \emph{total personnel working time} fixed.
Specifically, the shift agnostic optimum is defined as the optimal value of the following convex optimization problem:
\begin{equation}
\begin{split}
    \max \quad & \sum_{t=1}^{T} f_t\left(y_t\right)\\
    \text{s.t.}\quad &\sum_{t=1}^T y_t = sN\delta, \\
    &y_t \ge 0 \quad \text{for all } t\in [T].
\end{split}
\label{problem:shift-agnostic-optimum}
\end{equation}
Note that since the objective function of \eqref{problem:shift-agnostic-optimum} is 
continuous and the feasible set is compact, the problem indeed has
an optimal solution and the shift agnostic optimum is well-defined.

In the following lemma, we provide optimality conditions for problem \eqref{problem:shift-agnostic-optimum} that will 
help us to derive explicit formulas for the shift agnostic optimum 
for the concrete example that we will study in the remainder of this paper.

\begin{lemma}
\label{lemma:exact-shift-agniostic-optimum}
Let the reward functions $f_t$ be continuously differentiable and concave for all $t\in [T]$.
Then a vector $(y^*_1, y^*_2,\cdots,y^*_T)$ is an optimal solution to the optimization 
problem~\eqref{problem:shift-agnostic-optimum} if and only if there exist constants 
$\lambda\in \R$ and $\mu_1,\dots, \mu_T\in \Rp$ such that the following conditions hold:
\begin{equation}
    \begin{aligned}
        \frac{\partial f_t(y^*_t)}{\partial y_t} + \mu_t&= \lambda, && \text{for all } t\in [T]\\
        \sum_{t=1}^T y^*_t &= sN\delta, \\
        y_t^* &\ge 0, && \text{for all } t\in [T],\\
        y_t^* \mu_t &= 0, && \text{for all } t\in [T].
    \end{aligned}
    \label{eq:shift-agnostic-optimum-exact}
\end{equation}
If in addition $f_t$ is strictly concave for all $t\in [T]$, then the optimal solution is unique.
\end{lemma}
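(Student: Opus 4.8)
The plan is to recognize problem~\eqref{problem:shift-agnostic-optimum} as a convex program---maximizing the concave objective $\sum_t f_t(y_t)$ over the convex feasible set cut out by the single affine equality $\sum_t y_t = sN\delta$ and the $T$ affine inequalities $-y_t \le 0$---and to observe that the stated system~\eqref{eq:shift-agnostic-optimum-exact} is precisely its Karush--Kuhn--Tucker (KKT) system, with $\lambda$ the multiplier of the equality and $\mu_t$ the multipliers of the nonnegativity constraints. For convex problems, KKT points are exactly the global optima once a constraint qualification holds, so I would prove the two directions separately: sufficiency directly from concavity, and necessity by invoking the KKT necessary conditions under a constraint qualification that is automatic here.

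For sufficiency, suppose $(y_1^*,\dots,y_T^*)$ together with $\lambda,\mu_1,\dots,\mu_T$ satisfy~\eqref{eq:shift-agnostic-optimum-exact}, and let $(y_1,\dots,y_T)$ be any feasible point. Concavity of each $f_t$ gives the first-order inequality $f_t(y_t)\le f_t(y_t^*)+f_t'(y_t^*)(y_t-y_t^*)$, which I would sum over $t$ and then substitute $f_t'(y_t^*)=\lambda-\mu_t$ from the stationarity condition. The term $\lambda\sum_t(y_t-y_t^*)$ vanishes because both points satisfy $\sum_t y_t = sN\delta$, and the remaining term $-\sum_t \mu_t(y_t-y_t^*)$ is nonpositive: complementary slackness gives $\sum_t \mu_t y_t^* = 0$, while $\mu_t\ge 0$ and $y_t\ge 0$ force $\sum_t \mu_t y_t\ge 0$. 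Hence $\sum_t f_t(y_t)\le \sum_t f_t(y_t^*)$, so $(y_t^*)$ is optimal.

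For necessity, suppose $(y_t^*)$ is optimal. Because every constraint is affine, the linearity constraint qualification holds (equivalently, the point $y_t = sN\delta/T>0$ is strictly feasible for the inequalities, so Slater's condition is met for the equality--inequality system), and therefore the KKT necessary conditions must hold at the optimum. Writing the Lagrangian of the equivalent minimization of $-\sum_t f_t$ and reading off stationarity, primal feasibility, dual feasibility $\mu_t\ge 0$, and complementary slackness reproduces exactly~\eqref{eq:shift-agnostic-optimum-exact}; the only care needed is the sign bookkeeping, so that the multiplier of $-y_t\le 0$ enters the stationarity condition with coefficient $-1$ and hence appears as $+\mu_t$ in $f_t'(y_t^*)+\mu_t=\lambda$.

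Finally, for uniqueness under strict concavity, I would note that $\sum_t f_t$ is then strictly concave, and a strictly concave function attains its maximum at no more than one point of a convex set; since existence of an optimum was already established from compactness and continuity, the optimizer is unique. I expect the only genuinely technical step to be justifying necessity rigorously---confirming that the affine structure supplies a constraint qualification so that KKT is necessary---together with keeping the sign conventions consistent; the sufficiency and uniqueness arguments are routine consequences of concavity.
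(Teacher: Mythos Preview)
Your proposal is correct and follows essentially the same route as the paper: both invoke the KKT conditions, use the affine/convex structure of the constraints to justify necessity via a constraint qualification, appeal to concavity for sufficiency, and derive uniqueness from strict concavity. The only difference is that you spell out the sufficiency direction explicitly via the first-order concavity inequality, whereas the paper simply cites the standard fact that for convex problems the KKT conditions are sufficient; this extra detail is fine and does not change the approach.
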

\begin{proof}
The KKT conditions \cite{boyd2004convex} guarantee that any solution to the 
maximization problem \eqref{problem:shift-agnostic-optimum} must satisfy 
\eqref{eq:shift-agnostic-optimum-exact}.

Since $f_t$ is concave for all $t\in [T]$, and all constraints are linear, the problem
is convex, and thus, the above conditions also become 
sufficient. 

Finally, if $f_t$ is strictly concave in $y_t$ for all $t$, then the
objective function of problem \eqref{problem:shift-agnostic-optimum} is strictly concave, and thus, the optimal solution is unique.
\end{proof}

For the particular choice of $f_t$ given by
\eqref{eq:chosen-reward-function}, 
\begin{equation*}
    f_{t, d_t}(y_t) = d_t\left(1-e^{-ay_t/d_t}\right),
\end{equation*}
the shift agnostic optimum is given by
\begin{equation}
    \label{eq:shift-agnostic-optimum}
    y^*_t = \frac{sN\delta}{\sum_{\tau=1}^T d_t} d_t,
\end{equation}
which can be verified via equation system \eqref{eq:shift-agnostic-optimum-exact}.

In the subsequent sections, we will use the shift agnostic optimum as a benchmark to 
compare our approach against. To that end, we define the \emph{\relGap} as follows.

\begin{definition}[\RelGap{}]
    Let $\vec{x}$ be a shift plan with with total personnel time $sN\delta$, and let 
    $y_t$ be the corresponding supply at time $t$. Let $r^*$ be the corresponding shift agnostic 
    optimum, i.e., the optimum value of problem \eqref{problem:shift-agnostic-optimum}.
    Then the \emph{\relGap{}} $\Delta(\vec{x})$ is defined as the relative difference between
    the total reward of the shift plan $\vec{x}$ and the shift agnostic optimum,
    \begin{align}
            \Delta(\vec{x}) &\define \frac{ r^*  - \sum_{t=1}^T f_t(y_t)}{r^* }.
            \label{def:relgap}
    \end{align}
\end{definition}

If $r^* > 0$ and the reward functions $f_t$ are non-negative for all $t\in [T]$ (as it is the case for 
our particular choice of $f_t$ \eqref{eq:chosen-reward-function}), then 
$\Delta(\vec{x}) \in [0,1]$.

\subsection{Benchmarking against approaches using a separate demand modelling step}
\label{sec:separate-demand-modelling}
Here we outline a method to compare the quality of the shift plans generated by the process
described above with shift plans generated using a traditional demand modelling
step \cite{thompson1998labor_2}. As described in \Cref{sec:demand-modelling}, 
these approaches first convert the demand to required working staff
at each point of time. At the second step, the shift plans are generated by solving a 
mixed-integer convex problem that minimizes the sum of quadratic deviations from the 
required working staff at each point of time.

We will consider two different approaches for demand modelling described in Section 
\ref{sec:demand-modelling}: (a) 
service standards, and (b) economic standards, and compare the resulting shift plans 
from each approach with the approach we introduced. 

\subsubsection*{Service standards} 
In this approach, the number of required working staff at each point is time is
equated to the number that achieves a constant fraction of the demand being
served. Assuming that the reward function \eqref{eq:chosen-reward-function}
denotes the amount of served demand, maintaining a constant fraction $c\le 1$ of
the demand being served requires $y^{\text{s}}_t$ working staff at time $t$,
where 
\begin{equation}\label{eq:service-standards-implicit}
    f_t(y^{\text{s}}_t) = c d_t.
\end{equation}
Solving \eqref{eq:service-standards-implicit} for $y^{\text{s}}_t$ yields
\begin{equation} 
    y^{\text{s}}_t = a d_t \log{\frac{1}{1-c}}. \label{eq:service-standards}
\end{equation}


\subsubsection*{Economic standards} 
Maintaining a constant service standard may not be economically the best choice: A low service standard at low
demand periods may lead to less lost revenue than a similarly low service
standard at a high demand period. The economic standard approach attempts to tackle
this problem by assigning a cost per \emph{deployed staff}, as well as a reward (e.g. revenue)
per \emph{served demand} at each point in time. Then the number of required working 
staff at a point in time is obtained by the number of staff that maximizes the 
reward minus the cost, i.e.
\begin{equation}
    y^{\text{e}}_t = \argmax_{y\ge 0} \left\{f_t(y) - c y\right\},    \quad\forall t\in [T], \label{eq:economic-standards}
\end{equation}
where $c>0$ is the cost per deployed staff. Then for the choice of the reward function \eqref{eq:chosen-reward-function},
\eqref{eq:economic-standards} implies
\begin{equation}
    y^{\text{e}}_t = 
    \begin{cases}
        \frac{d_t}{a} \log{\frac{a}{c}}, & \text{if }a > c, \\
        0, & \text{else}.
    \end{cases}
\end{equation}

In both cases, the shift plan can then be computed by minimizing the deviation between 
the supply $y_t$ and the desired supply $y_t'$, which is $y_t^s$ for service standards 
and $y_t^e$ for economic standards. 
Specifically, the shift plan is computed by solving the following mixed-integer quadratic program:

\begin{subequations}
\label{mip:seperate-demand-modelling}
\begin{align}
     \min \quad & \sum_{t=1}^{T} \left(y_t-y_t'\right)^2 \label{mip:seperate-quadratic-costfunc}\\
    \text{s.t.} \quad & y_{t} = \sum_{\tau=t-\delta + 1}^{t} x_{\tau}, && \text{for all } t\in [T], \\
    &y_t \le c, &&\text{for all } t\in [T],  \\
    &\sum_{t=1}^T x_t = sN, \\
    &z_t = \sum_{\tau=t-\delta-\beta +1}^{t} x_t, &&\text{for all } t\in [T], \\
    &z_t \le N, &&\text{for all } t\in [T],  \\
    &x_t \ge 0 &&\text{for all } t\in [T], \\
    &x_t \in \mathbb{Z} &&\text{for all } t\in [T].
\end{align}
\end{subequations}
We have chosen the quadratic objective function \eqref{mip:seperate-quadratic-costfunc}
instead of the sum of absolute deviations $\sum_{t=1}^{T} \left|y_t-y_t'\right|$ because the former 
penalizes large deviations more than the latter.

\section{Results}
Now we present the results of solving the mixed-integer convex program described in 
Section~\ref{sec:full-program}, using the open source SCIP optimization suite~\cite{BestuzhevaEtal2021OO}.
To this end, we reduce the problem to a mixed-integer linear program (MIP) by
first replacing the reward function $f_t$ defined in \eqref{eq:concr-prodmodel} with a piecewise linear 
concave approximation 
\begin{equation}\label{eq:piecewise-linear-approximation}
    f_t'(y_t) = \min \left\{ h_{t,1}(y_t), \dots, h_{t,k}(y_t) \right\}, 
\end{equation}
where $h_{t,1}, \dots, h_{t,k}$ are linear functions.
The resulting program can then be transformed into the following MIP:
\begin{subequations}
\label{mip:piecewise-linear}
\begin{align}
     \max \quad & \sum_{t=1}^{T} r_t\\
    \text{s.t.}\quad  & r_t \le h_{t,i}(y_t) \quad \text{for all } i\in [k], \; t\in [T], \\
    &\eqref{program:first-constraint} - \eqref{program:last-constraint}.
\end{align}
\end{subequations}
We note that, in order to obtain an accurate approximation of the reward function $f_t$
while not increasing the number of variables and constraints too much,
we may allow the number of linear segments $k$ in the piecewise linear approximation
\eqref{eq:piecewise-linear-approximation} to depend on the time $t$.

\subsection{Solving the MIP to produce a shift plan}
In the following, we will use the term \emph{shift agnostic optimum} for both the 
optimal \emph{value} as well as the optimal \emph{solution} of the optimization problem 
\eqref{problem:shift-agnostic-optimum}, whenever it is clear from the context which one is meant.
\Cref{fig:one-solution} shows a solution to the MIP formulation \eqref{mip:piecewise-linear} for 
$N=10$, $s=5$, $\delta=8$, $\beta=8$, $d_{\max}=10$, and $a=2$. 
The solution is reasonably close to the shift agnostic optimum, following the periodic demand pattern. However, 
the various shift constraints prevent the solution
from perfectly matching the shift agnostic optimum. In particular, the solution
does not have the smooth shape of the shift agnostic optimum, but is piecewise  
constant. Moreover, the solution does not reach the peak of the shift agnostic optimum  
in the middle of the week. In the following sections, we will investigate the impact 
of the various shift constraints on the quality of the solution and how relaxing them
helps to overcome these limitations.

\begin{figure}[!hbt]
\centering
\includegraphics[width=\textwidth]{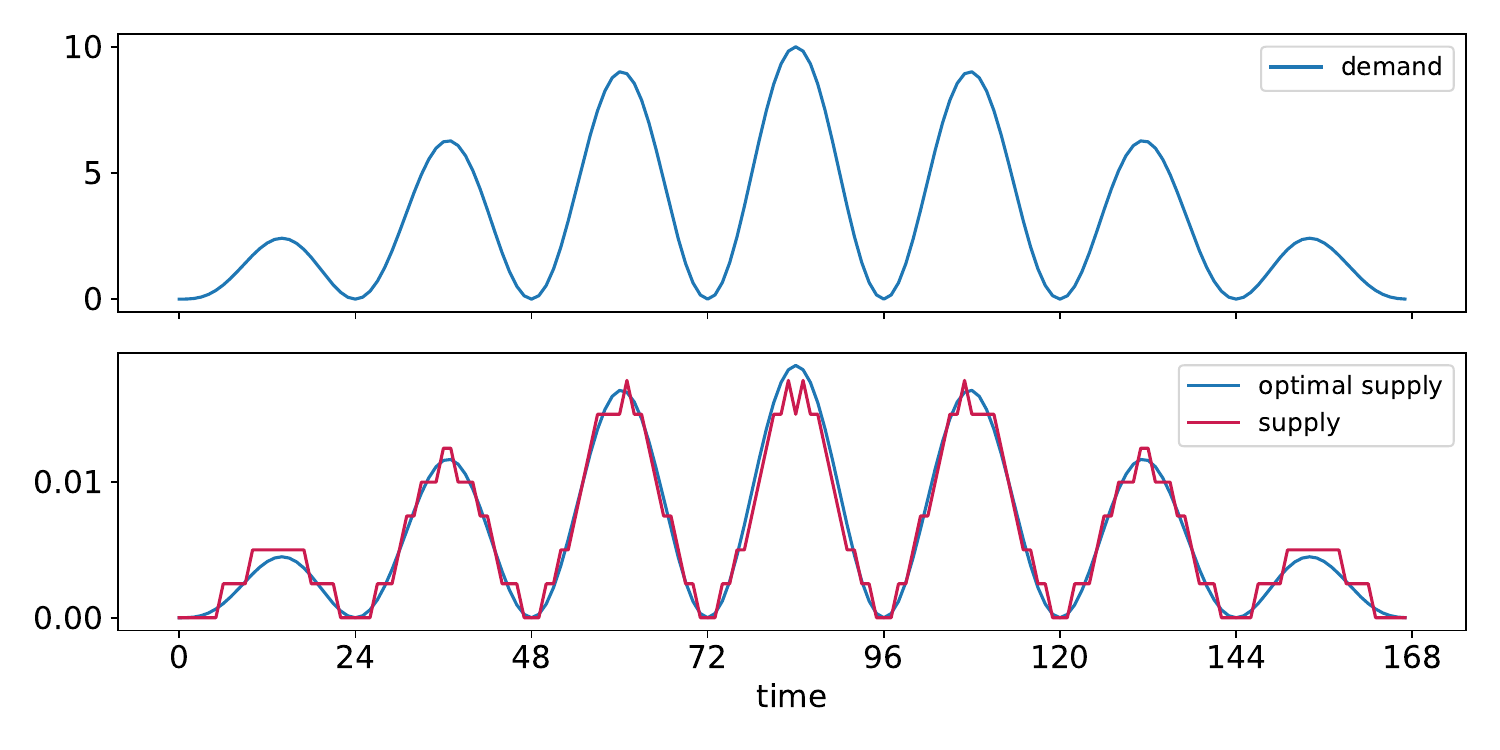}
\caption{A solution to the MIP formulation \eqref{mip} that is close to the shift agnostic 
optimum. Solved for $N=10$, $s=5$, $\delta=8$, $\beta=8$, $d_{\max}=10$, $a=2$.}
\label{fig:one-solution}
\end{figure}

\subsection{Impact of different shift constraints on the gap between solution and shift 
agnostic optimum}
In this subsection, we analyze the impact of various shift constraints on the
quality of the optimum feasible shift plan. Specifically, we investigate the impact of the
total number of drivers $N$, the number of shifts per driver $s$, and the shift
length $\delta$ on the \emph{\relGap{}} as defined in \eqref{def:relgap}.

\subsubsection{Impact of the total number of drivers}
Since every started shift is active for $\delta$ time steps, the active shifts
curve $\vec y = (y_t)_{t\in [T]}$ is the sum of $sN$ step functions. For a small
number of drivers $N$, it is therefore difficult to approximate the smooth shift
agnostic optimum $\vec y^* = (y^*_t)_{t\in [T]}$. As the number of drivers
increases, the approximation becomes more accurate, and the gap between the
solution $\vec y$ and the shift agnostic optimum $\vec y^*$ decreases.
\Cref{fig:different-vrh} illustrates this effect. For an increasing number of
drivers~$N$, we computed the corresponding optimal shift plan $\vec x$ and its 
supply curve $\vec y$ as well as the shift agnostic optimum $\vec y^*$.  For
comparability, we normalized each supply curve by dividing it by the total
working hours $sN\delta$.

For each number of drivers, we have computed the served trips using
\eqref{eq:chosen-reward-function}, and also the served trips for the shift
agnostic optimum. The gap between the two is then the \emph{lost trips} due to
shift restrictions.These lost trips, divided by the total
served trips due to the shift agnostic optimum, i.e. the \emph{\relGap} as
defined in \eqref{def:relgap}, are plotted in panel (b) of \Cref{fig:different-vrh}.  We
observe that the higher the number of drivers, the smoother the supply curve of
the optimal shift plan and the smaller the \relGap{}. Notice, however, that even
for as many as 200 drivers, the shift agnostic optimum is not reached. The
reason for this is that, since every driver needs to work $s$ shifts per week,
only $1/s$ of the total number of shifts can be active at the same time.  Thus,
the high peak in the middle of the week cannot be completely fulfilled.  This
limitation can be overcome by reducing the number of shifts per driver, as we
will see in the next section.

\begin{figure}[!ht]
\centering
\includegraphics[width=\textwidth]{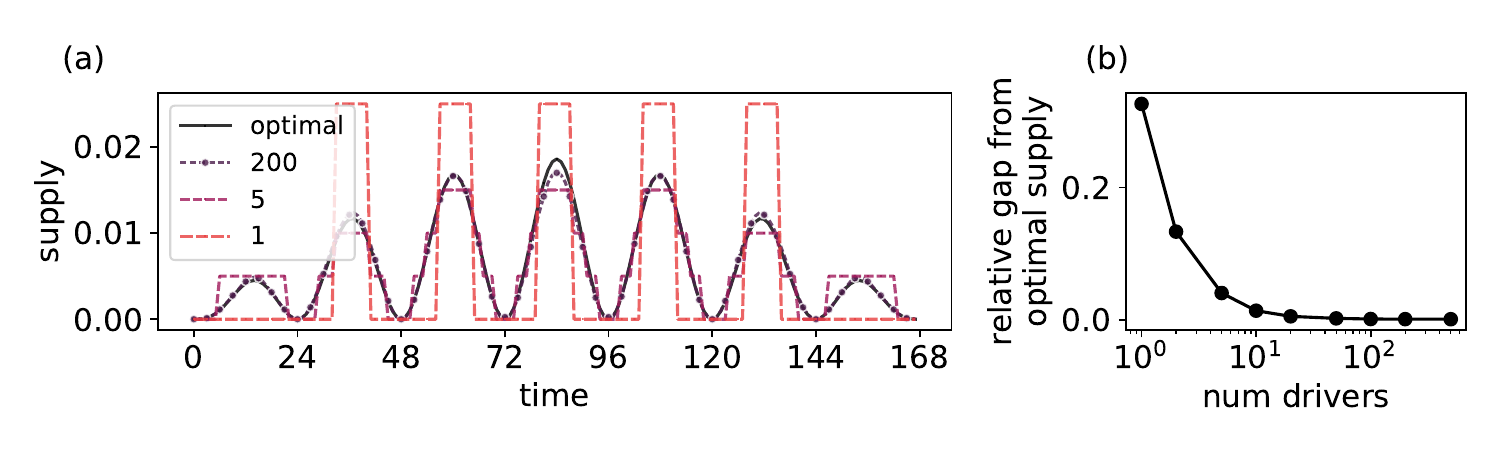}
\caption{As the number of drivers $N$ increases, the difference between the shift agnostic optimum 
and the optimal shift plan decreases. (a) The supply curve, for different values of total drivers (coloured dashed lines), 
as well as the shift agnostic optimum (solid black line). Each supply curve is
normalized, i.e., divided by the total working hours $sN\delta$.
(b) The \relGap{} compared to the shift agnostic optimum, which 
decreases as the number of drivers increases.}
\label{fig:different-vrh}
\end{figure}

\subsubsection{Impact of weekly shifts per driver}
If the total working time $sN\delta$ and the shift length
$\delta$ is fixed but both the number of drivers $N$ and the number of shifts
per driver $s$ are flexible, then it is beneficial to choose more drivers with fewer shifts;
see \Cref{fig:different_num_shifts}. As an intuitive example, consider two scenarios, 
one with $N$ drivers with $s$ shifts each, and another with $sN$ drivers with
only one shift each. Then any shift plan for the first scenario can also be realized
in the second scenario by assigning each of the $s$ shifts of a driver in the first scenario
to $s$ different drivers in the second scenario. 
Therefore, the optimal shift plan in the second scenario is at least as good as the optimal
shift plan in the first scenario. 
Moreover, while in the first scenario only $N$ shifts can be active at the same time,
in the second scenario $sN$ shifts can be active at the same time, allowing to reach 
higher peaks.

\begin{figure}[!ht]
\centering
\includegraphics[width=\textwidth]{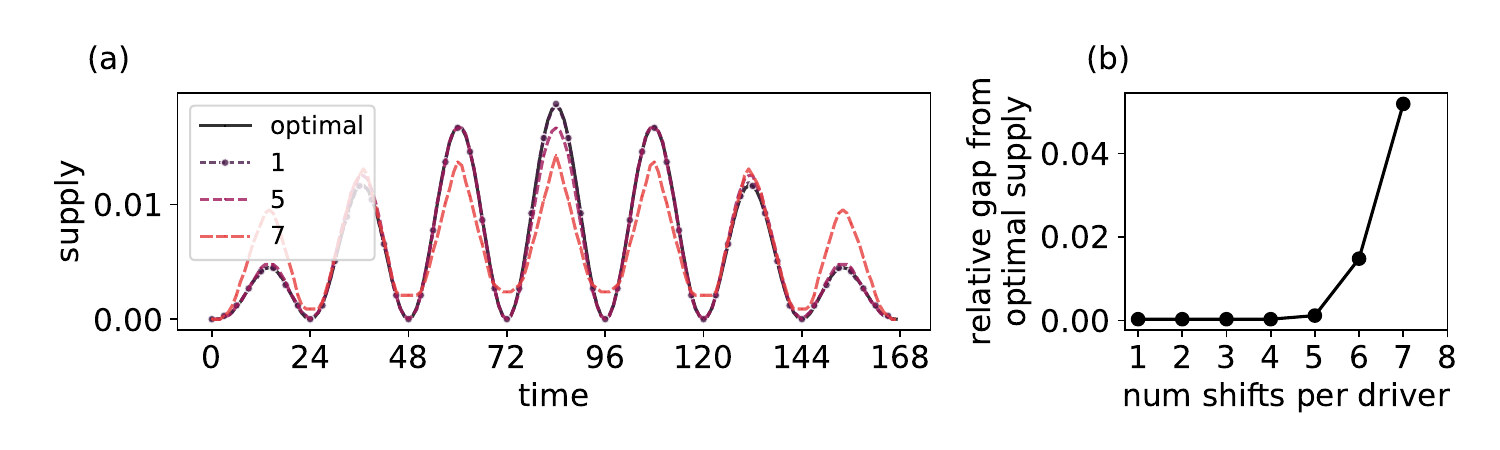}
\caption{As the number of shifts per driver $s$ decreases, the difference between the shift agnostic optimum
and the optimal shift plan decreases.}
\label{fig:different_num_shifts}
\end{figure}

\subsubsection{Impact of shift lengths}
Finally, if the total working time $sN\delta$ and the number of shifts
per driver $s$ is fixed but both 
the shift length $\delta$ and the number of drivers $N$ are flexible, 
then it is beneficial to choose more drivers with shorter shifts. In this  
way, we have more step functions with 
smaller support that can approximate
the smooth shift agnostic optimum better; see \Cref{fig:different_shift_length}. 
Consider, for example, five different
choices of shift length, $\delta\in \set{1,2,4,8,16}$. Then any solution with a shift length 
of $\delta = 2^i$, $i>0$, and $N$ drivers can also be realized by $2N$ drivers with
shift length $\delta' = \delta/2$ by substituting each shift of length $\delta$ with two
shifts of length $\delta'$. Therefore, the smaller $i$ is, the better the
resulting solution.

\begin{figure}[!ht]
\centering
\includegraphics[width=\textwidth]{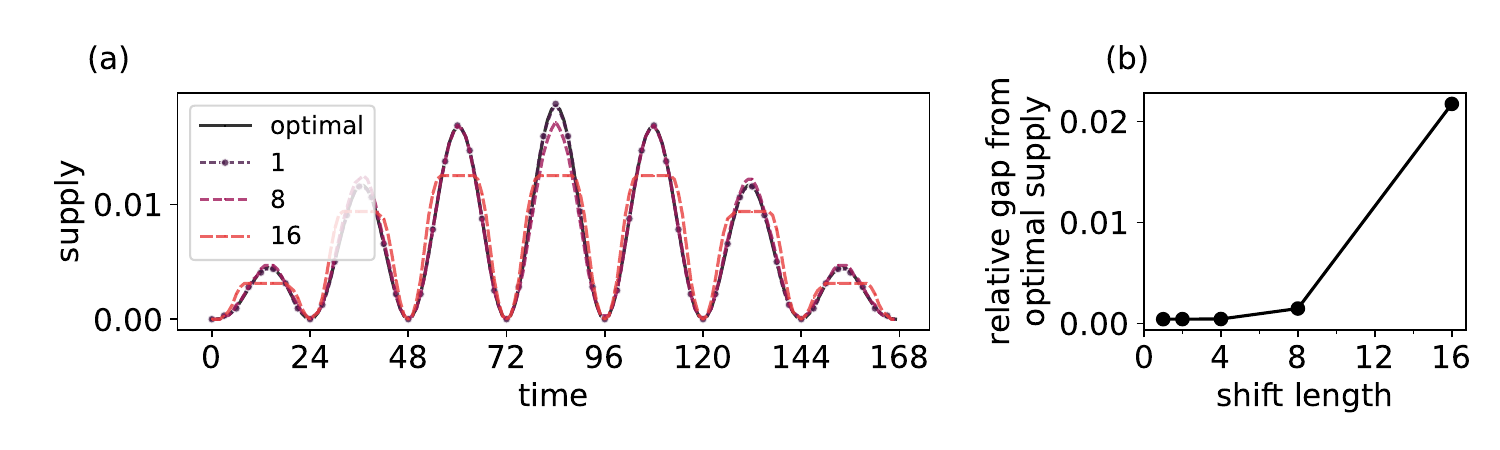}
\caption{As the shift length $\delta$ decreases, the difference between the shift agnostic optimum
and the optimal shift plan decreases.}
\label{fig:different_shift_length}
\end{figure}

\subsection{Comparison to approaches with a separate demand modelling step}
\label{sec:two-stage-solution}
\begin{figure}[!ht]
\centering
\includegraphics[width=0.8\textwidth]{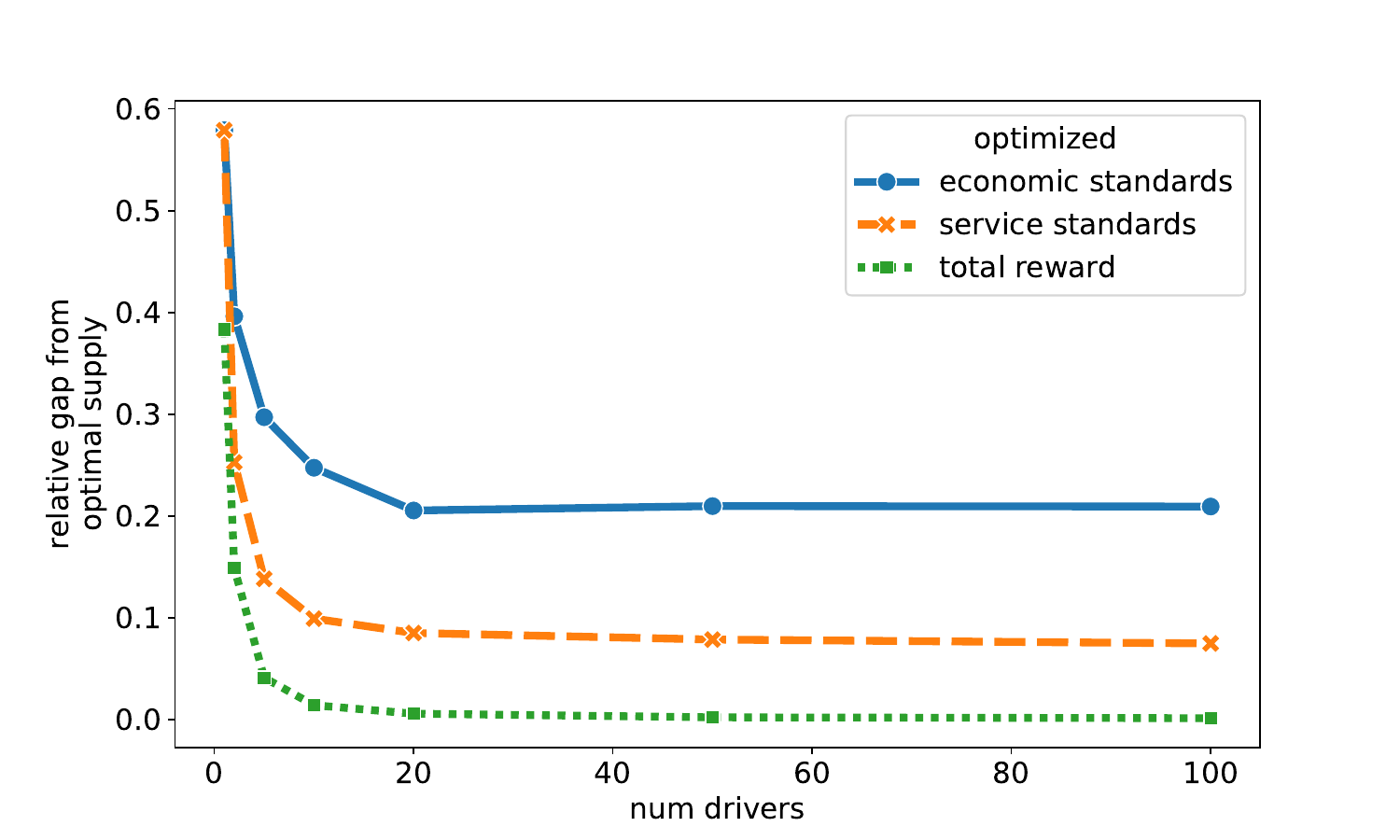}
\caption{Our shift planning method achieves higher total reward than approaches with 
a separate demand modelling step. We have plotted the \relGap{} 
\eqref{eq:shift-agnostic-optimum} for different values of the number of drivers $N$.
All three approaches achieve smaller \relGap{} as the number of drivers increases, but
our approach achieves the least \relGap{} for all values of $N$. For service standards,
the parameter $c$ was set to $0.8$, and for economic standards, the parameter $c$ was set to $1$.
Other parameters were set to $s=5$, $\delta=8$, $\beta=9$, $d_{\max}=0.75N$, $a=2$. 
} 
\label{fig:comparison_traditional_approaches}
\end{figure}

To demonstrate the benefit of combining the demand modelling
with the shift plan optimization in one step, we now compare the quality of the shift plans generated by our method with ones 
generated by the two approaches with separate demand modelling steps described in 
Section \ref{sec:separate-demand-modelling}. As before, we will use the \relGap{} as
the metric for these comparisons.

We see in Figure \ref{fig:comparison_traditional_approaches} that the quality of the 
shift plans generated by both of these approaches lead to significantly less total 
reward than the ones generated by our approach. This stems from the fact that our 
approach as described in the problem formulation \eqref{full-mip-ours} directly maximizes
the total served trips, thereby minimizing the \relGap{}. Both of the other approaches, 
by the very nature of having a separate demand modelling step, first compute the desired
supply, and then minimize the deviation from that. 
This two-step process leads to a loss of information, since
the deviation from the desired supply
does not capture the resulting lost revenue: The same deviation at different points in time
can result in different amounts of lost revenue depending on the demand pattern.
As a result the two-step optimization 
process does in general not maximize the total served trips.

\paragraph{Robustness of traditional approaches with internal parameters}
\begin{figure}[!ht]
\centering
\includegraphics[width=\textwidth]{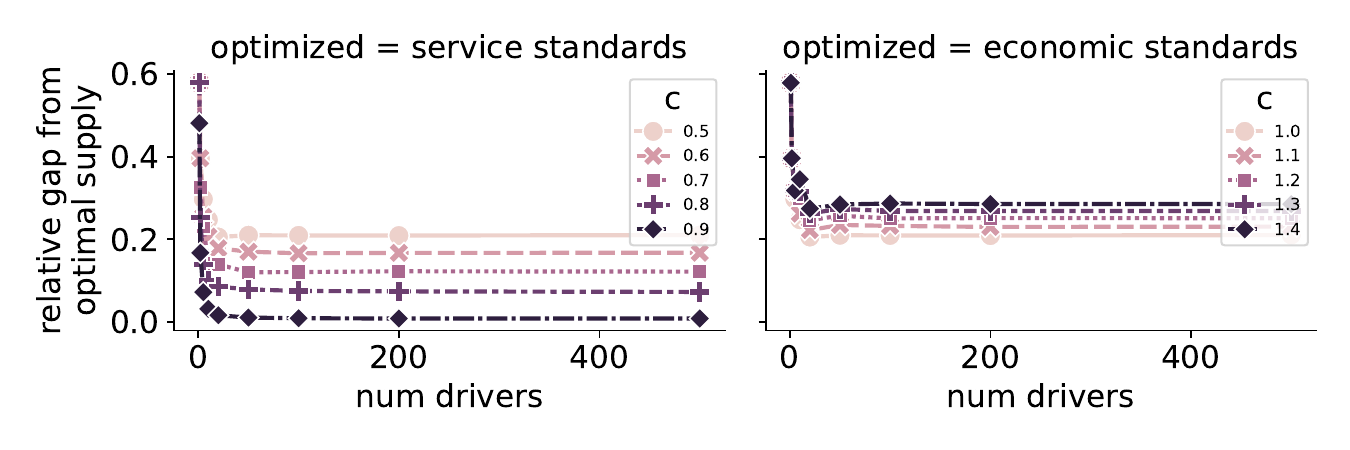}
\caption{Staff scheduling approaches with demand modelling for both service standards 
and economic standards are highly dependent on the choice of the internal parameters. We
have plotted the \relGap{}
for both approaches, for different choices of the internal parameter $c$.}
\label{fig:traditional_approaches_different_parameters}
\end{figure}

We also see in \Cref{fig:traditional_approaches_different_parameters} that the
quality of the shift plans is highly dependent on the choice of service
parameter in \eqref{eq:service-standards} for the service standards approach, 
and on the choice of the cost parameter in \eqref{eq:economic-standards} for the
economic standards approach. This demonstrates another advantage of our approach
of directly maximizing the total reward, namely that it is not dependent on any such
internal parameters.


\section{Outlook}
In this article we have introduced a novel approach to staff scheduling, where the total
reward over the planning period is directly maximized, instead of first computing a 
desired supply by a demand modelling step. For showcasing the benefits of our approach,
we chose a scenario with only one shift type and an 
exponential reward function, for the sake of simplicity. It will be interesting to study 
how our approach performs in more complex scenarios.

\paragraph{More complex reward function and shift types}
First of all, it would be interesting to apply our approach to scenarios with more than one shift type.
This would require generalizing the constraints
\eqref{eq:constraints-max-extended-shifts} and \Cref{lemma:shift-assignment}
to ensure that the optimized shift plan is rosterable.  
Also, we have so far not considered
breaks within a shift, but this can be easily accommodated by redefining active
shifts in \eqref{eq:active-shifts}. 

If the reward function is not concave in certain bounded 
subsets of its domain, often it is possible to approximate it with a concave 
approximation, e.g. by using the concave hull of a piecewise linear approximation. 
How the improvement of using our approach depends on the choice of the reward 
function is also an interesting question.

Further, it might be the case that the different shift types contribute in different ways to the
reward function. For example, full-time employees might be more experienced and thus
more productive than employees only working a few hours per week. 
In this case, instead of letting the reward function 
$f_t$ only depend on the aggregated supply $y_t = \sum_{i\in [k]}y_{t,i}$, 
one can use a reward function $f_t(y_{t,1}, \dots, y_{t,k})$ that directly depends on the 
supply of each shift type. If it is possible to express the reward function as
$f_t(y_{t,1}, \dots, y_{t,k}) = \sum_{i\in [k]} f_{t,i}(y_{t,i})$, and all of the functions 
$f_{t,i}$ are concave, then the results of this article can easily be extended to this 
more general case.

Instead of only maximizing the total reward, a company might also want to maximize other
metrics at the same time, for example, to achieve the right 
trade-off between total revenue and service quality. It would be interesting to extend 
our method to such use cases by applying techniques from multicriteria optimization.

\paragraph{Extending our approach to shift assignment}
In this article, we have limited ourselves
to producing a shift plan without assigning the shifts to individual employees. 
For only one shift type, as we have considered in \Cref{sec:application},
the proof of \Cref{lemma:shift-assignment} can readily be turned into
an algorithm for shift assignment. A generalization of \Cref{lemma:shift-assignment} 
and an algorithm for shift assignment to multiple shift types is left for future research.
We note that a different approach for assigning shifts to employees is to redefine shift types 
as described in \Cref{sec:problem-setting} so that each shift type
corresponds to shifts from a single employee. However, this comes at the cost of a larger 
optimization problem.

\section{Conclusion}
We have presented a novel approach to staff scheduling in this article, where the 
total reward over the planning period is directly maximized, instead of first computing
a desired supply by a demand modelling step and then minimizing the deviation from that.
We have shown that our approach leads to higher total reward than the traditional 
approaches. We have also presented a novel metric for evaluating the impact of 
constraints on the quality of a shift plan, the \relGap{}.

\section{Declaration of interests}
We acknowledge that Debsankha Manik and Rico Raber are employed at the ride-pooling operator
MOIA.

\section*{Acknowledgements}
We thank our colleagues at MOIA GmbH and MOIA Operations Germany GmbH for their valuable input, and especially Yılmaz Arslanoğlu, who co‑developed the initial prototype of the employee-scheduling solution presented in this paper.

\printbibliography
\end{document}